\newtheorem{theorem}{Theorem}[section]
\newtheorem{proposition}[theorem]{Proposition}
\newtheorem{remark}[theorem]{Remark}
\newtheorem{example}[theorem]{Example}
\numberwithin{equation}{section}
\def\E{{\mathbb{E}}}
\newcommand{\R}{{\mathbb R}}
\definecolor{blue0}{RGB}{0,77,153} 
\definecolor{red0}{RGB}{179,0,77} 
\definecolor{green0}{RGB}{134,219,76} 
\definecolor{gray0}{RGB}{84,97,110}
\title{Capturing Smile Dynamics with the Quintic Volatility Model: SPX, Skew-Stickiness Ratio and VIX}
\author[1]{Eduardo Abi Jaber\thanks{eduardo.abi-jaber@polytechnique.edu. The first author is grateful for the financial support from the Chaires FiME-FDD, Financial Risks, Deep Finance \& Statistics and Machine Learning and systematic methods in finance at École Polytechnique.}}
\author[2]{Shaun (Xiaoyuan) Li\thanks{shaunlinz02@gmail.com. The views expressed in this paper are those of the authors and do not reflect those of Morgan Stanley or its affiliates.\\} }
\affil[1]{Ecole Polytechnique, CMAP}
\affil[2]{Morgan Stanley, Paris}
\begin{document}

\maketitle

\begin{abstract}
We introduce the two-factor Quintic Ornstein–Uhlenbeck (OU) model, where volatility is modelled as a degree-five polynomial of the sum of two Ornstein–Uhlenbeck processes driven by the same Brownian motion, each mean-reverting at a different speed. We demonstrate that the model effectively captures the volatility surfaces of SPX and VIX while aligning with the skew-stickiness ratio (SSR) across maturities ranging from a few days to over two years. Furthermore, it is consistent with key empirical stylized facts, notably reproducing the Zumbach effect.
\end{abstract}

\begin{description}
\item[JEL Classification:] G13, C63, G10, C45. 
\item[Keywords:] SPX \& VIX options, Stochastic volatility, Skew-Stickiness-Ratio,  Derivative pricing, Calibration, SPX Stylized facts, Quantization
\end{description}

\section{Introduction}
A good volatility model must capture not only the shape of the volatility smile, but also its dynamics. This is of particular importance for SPX options: while many models can reproduce the shape of the volatility surface at a given point in time, not many of them can correctly describe how the volatility surface evolves.

A key measure of this evolution is the skew-stickiness ratio (SSR), introduced by \cite*{bergomi2015stochastic}, which quantifies the sensitivity of implied volatility to log-price variations. This metric is widely used in the industry for options trading, hedging, and price discovery, making it a crucial test for any volatility model. However, despite its importance, fitting the SPX volatility surface while remaining consistent with the SSR remains a challenging task for existing models in the literature.

This limitation has been highlighted in recent works such as \cite*{bourgey2024smile, friz2024computing}, which emphasize the inherent difficulties in capturing implied volatility dynamics. In particular,  \cite*{friz2024computing} shows that models with similar quality of fit to the SPX smile can produce drastically different SSR term structure, while \cite*{bourgey2024smile} explores the trade-offs between calibration quality and dynamic consistency. Specifically, \cite*{bourgey2024smile} investigates the calibration of several stochastic volatility models—including Heston, two-factor Bergomi, and their rough counterparts—against the term structure of SPX ATM implied volatility, skew, and SSR. The study finds that even after careful calibration, the SSR of these models “significantly deviates from empirical market behavior around the calibration date”, revealing a structural inconsistency between these stochastic volatility models and the observed dynamics of the SPX volatility surface.

To quote  \cite*{friz2024computing} directly:
\begin{center}
\textit{``This inconsistency remains, in our view, an important outstanding problem.}''    
\end{center}

We introduce the two-factor Quintic Ornstein–Uhlenbeck (OU) model, which resolves this issue. In this model, the volatility is modelled as a polynomial of degree five of the sum of two OU processes driven by the same Brownian Motion, each mean-reverting at a different speed. This structure provides the flexibility needed to simultaneously capture the term structure of SPX volatilities, at-the-money skew, and SSR, see Figure~\ref{fig:spx_term_struc}, overcoming the limitations of existing stochastic volatility models.

Our model extends the one-factor Quintic OU model introduced by \cite*{jaber2022quintic}, originally designed to fit the joint SPX-VIX volatility surfaces for maturities up to three months. The one-factor Quintic OU model achieved excellent fits across 10 years of data (see \cite*{abi2022joint}) but lacked the flexibility to fit longer maturities without introducing time-dependent parameters. By incorporating a second factor, the two-factor Quintic OU model provides better control over the volatility term structure, allowing for a more robust joint calibration of SPX and VIX smiles across different maturities ranging from a few days to over two years, as shown in Figure~\ref{fig:calib_2_factor}.

In addition, we take a step forward by calibrating both the SPX-VIX volatility surfaces and the term structure of the SSR simultaneously - something that has not been attempted before in the literature. Joint calibration of SPX-VIX surfaces is already notoriously challenging on its own, let alone adding the SPX SSR term structures to the mix. Nonetheless, our model succeeds in capturing SPX-VIX volatility surfaces, while remaining consistent with the observed SSR range over several maturities, see Figures~\ref{fig:calib_2_factor_ssr} and \ref{fig:ssr_two_factor_no_calib}.

Finally, we assess the model’s consistency with the well-documented set of stylized facts observed in historical data. Empirical studies \cite*{chicheportiche2014fine, cont2001empirical, zumbach2009time, zumbach2010volatility} show that realistic volatility models should exhibit features such as volatility clustering, skewness, kurtosis, and leverage effect. We demonstrate that the two-factor Quintic model can reproduce these key statistical patterns, see  Figure~\ref{fig:stylized_facts_group}. In addition, it captures the elusive Zumbach effect that is often difficult for stochastic volatility models to capture, see Figure~\ref{fig:zumbach}. 


The numerical results presented in this paper demonstrate the potential of the two-factor Quintic OU model: not only does it achieve remarkable accuracy in capturing both the SPX and VIX volatility surfaces and the SSR, but it also aligns with the stylized facts of SPX historical time series. This makes the two-factor Quintic OU model a practical and effective solution to the long-standing calibration-dynamics trade-off in volatility modeling. {We provide a detailed implementation of the two-factor Quintic OU model:  \url{http://github.com/shaunlinz02/two_factor_quintic_ou}.} 

This paper is outlined as follows. Section \ref{2_fac_model} introduces the two-factor Quintic OU model, the pricing of SPX and VIX derivatives, and the computation of the model SSR. Section \ref{numerical_experiments} presents several calibration results of the two-factor Quintic OU model. Lastly, Section \ref{sf_all} compares the stylized facts observed in SPX market data with those generated by the two-factor Quintic OU model.

\section{The two-factor Quintic OU model}\label{2_fac_model}

The two-factor Quintic OU stochastic volatility model is defined by the following dynamics
\begin{equation}
  \begin{aligned}
    dS_t &= S_t\sigma_t \left(\rho dW_t + \sqrt{1-\rho^2} dW^{\perp}_t \right), \quad S_0>0,\\
    \sigma_t &= g_0(t)p(Z_t), \quad p(z) = \sum_{k=0}^5\alpha_k z^k,\\
    Z_t &=  \theta X_t+(1 - \theta)Y_t,\\
    X_t &= \int_0^t e^{-\lambda_x(t-s)}dW_s, \quad Y_t = \int_0^t e^{-\lambda_y(t-s)}dW_s,
  \end{aligned}\label{two_factor_Quintic}
\end{equation}
where $(W, W^{\perp})$ is a two-dimensional Brownian motion defined on a risk-neutral filtered probability space $(\Omega, \mathcal F,(\mathcal F_t)_{t\geq 0}, \mathbb Q )$ satisfying the usual conditions. The two-factor Quintic OU model has the following model parameters
$$
\Big( \lambda_x, \lambda_y, \theta, \rho, (\alpha_k)_{ 0 \leq k\leq 5} \Big),$$
with $\rho \in [-1,1]$, $\lambda_x, \lambda_y >0$ { such that $1/\lambda_x$ and $1/\lambda_y$ the characteristic time (in years)}, $\theta \geq 0$, $\alpha_k \in \R$. $g_0(\cdot)$ is a deterministic input curve that can be used to match the market term structure of volatility. For example, setting 
\begin{equation}
    g_0(t):=\sqrt{{\xi_0(t)}/{\E \left[p(Z_t)^2\right]}}, \quad t \geq 0, \label{g_o_expression}
\end{equation}
allows the model to match the term structure of the forward variance swap observable on the market, since
\[
    \mathbb E\left[ \int_0^T \sigma_t^2 dt \right] = \int_0^T \xi_0(t)dt, \quad T> 0.
\]
{\begin{remark}
Given the expression of $g_0(\cdot)$ in \eqref{g_o_expression}, one can normalize the polynomial $p(\cdot)$ in \eqref{two_factor_Quintic} by any non-zero constant without modifying the model dynamics. We can thus fix $\alpha_5 = 1$ to further reduce the number of model parameters to be calibrated. Furthermore, it can be shown that the underlying spot $S$ is a true martingale whenever $\alpha_5>0$ and $\rho \leq 0$, {as proved in the case of a one-factor Quintic model in the longer arXiv version of \cite* {jaber2022quintic}.}
\end{remark}
}

\begin{proposition} (Path-dependent property of $Z_t$)
    The process $Z_t$ admits the following dynamics
\begin{equation}\label{eq:Zdyn}
\begin{aligned}
    dZ_t &=\left(-\left( \lambda_x\theta + \lambda_y \left(1-\theta\right) \right)Z_t + \theta\left(1-\theta\right)\left(\lambda_y - \lambda_x\right)^2 \widehat Z_t\right)dt +  dW_t,
\end{aligned}
\end{equation}
with the path-dependent drift
\begin{equation}
\begin{aligned}\widehat Z_t = \int_0^te^{-\left(\lambda_x\left(1-\theta\right) + \lambda_y\theta\right)(t-s)}Z_sds.
\end{aligned}\label{z_hat}
\end{equation}
\end{proposition}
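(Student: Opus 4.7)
The strategy is a direct differentiation-and-regroup argument. Since $X_t$ and $Y_t$ are standard Ornstein--Uhlenbeck integrals driven by the same $W$, I would first observe that
\[
dX_t = -\lambda_x X_t\, dt + dW_t, \qquad dY_t = -\lambda_y Y_t\, dt + dW_t,
\]
either by Itô's formula applied to $e^{\lambda_x t} X_t$ and $e^{\lambda_y t} Y_t$, or by direct differentiation under the integral sign. Taking the linear combination with weights $\theta$ and $1-\theta$ then yields, because the diffusion coefficients both equal $1$,
\[
dZ_t = -\bigl(\lambda_x\theta X_t + \lambda_y(1-\theta) Y_t\bigr)\, dt + dW_t.
\]

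The next step is purely algebraic: set $\mu := \lambda_x\theta + \lambda_y(1-\theta)$ and rewrite the drift as $-\mu Z_t$ plus a correction. Adding and subtracting $\mu\theta X_t$ and $\mu(1-\theta) Y_t$, a short computation gives
\[
\lambda_x\theta X_t + \lambda_y(1-\theta) Y_t = \mu Z_t - \theta(1-\theta)(\lambda_y-\lambda_x)(X_t - Y_t).
\]

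The crux is therefore to identify $X_t - Y_t$ with a multiple of $\widehat Z_t$. I would apply the stochastic Fubini theorem to exchange the time integration in $\widehat Z_t = \int_0^t e^{-\nu(t-s)} Z_s\, ds$, with $\nu := \lambda_x(1-\theta)+\lambda_y\theta$, and the stochastic integration defining $Z_s$. This reduces the task to computing, for each $u \in [0,t]$, the deterministic kernel
\[
K(t-u) := \int_u^t e^{-\nu(t-s)}\bigl(\theta e^{-\lambda_x(s-u)} + (1-\theta)e^{-\lambda_y(s-u)}\bigr)\, ds.
\]
Using the identities $\nu - \lambda_x = \theta(\lambda_y - \lambda_x)$ and $\nu - \lambda_y = -(1-\theta)(\lambda_y-\lambda_x)$, the two exponential integrals collapse neatly to
\[
K(t-u) = \frac{e^{-\lambda_x(t-u)} - e^{-\lambda_y(t-u)}}{\lambda_y - \lambda_x},
\]
so that $\widehat Z_t = (X_t - Y_t)/(\lambda_y - \lambda_x)$. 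Substituting this back into the drift gives exactly $-\mu Z_t + \theta(1-\theta)(\lambda_y-\lambda_x)^2 \widehat Z_t$, which proves \eqref{eq:Zdyn}.

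The main obstacle is really just the stochastic Fubini step together with keeping track of signs in the $\nu - \lambda_x$ and $\nu - \lambda_y$ simplifications; everything else is bookkeeping. A minor point is the degenerate case $\lambda_x = \lambda_y$, where both sides of \eqref{eq:Zdyn} reduce to the standard OU equation $dZ_t = -\lambda_x Z_t\, dt + dW_t$, and the factor $(\lambda_y-\lambda_x)^2$ vanishes while $\widehat Z_t$ remains finite, so the identity holds trivially.
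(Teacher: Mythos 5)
Your proposal is correct, and it follows the same overall skeleton as the paper's proof — Itô on $X$ and $Y$, the linear combination giving $dZ_t = -(\lambda_x\theta X_t + \lambda_y(1-\theta)Y_t)\,dt + dW_t$, and the algebraic regrouping of the drift around $X_t - Y_t$ — but you establish the key identity $X_t - Y_t = (\lambda_y-\lambda_x)\widehat Z_t$ by a genuinely different mechanism. The paper introduces $\widetilde Z_t := X_t - Y_t$, observes that its $dW_t$ terms cancel so that $\widetilde Z$ satisfies the deterministic linear ODE $d\widetilde Z_t = \left(-(\lambda_x(1-\theta)+\lambda_y\theta)\widetilde Z_t + (\lambda_y-\lambda_x)Z_t\right)dt$, and reads off the integral representation by variation of constants; you instead start from the definition \eqref{z_hat} of $\widehat Z_t$, apply stochastic Fubini, and collapse the resulting kernel $K(t-u)$ to $\bigl(e^{-\lambda_x(t-u)}-e^{-\lambda_y(t-u)}\bigr)/(\lambda_y-\lambda_x)$ using $\nu-\lambda_x=\theta(\lambda_y-\lambda_x)$ and $\nu-\lambda_y=-(1-\theta)(\lambda_y-\lambda_x)$. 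Your kernel computation checks out, so both routes are valid. The paper's ODE argument is shorter and avoids any appeal to stochastic Fubini, while your computation is more self-contained (you verify the convolution identity directly rather than recognizing it as the solution of a forced linear ODE) and has the side benefit of exhibiting $\widehat Z_t$ explicitly as a Gaussian Volterra integral of $W$. Your handling of the degenerate case $\lambda_x=\lambda_y$ is a nice touch that the paper leaves implicit.
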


\begin{proof}
We first define the auxiliary process $\widetilde Z_t := X_t - Y_t,  
$
so that
\begin{equation}
\begin{pmatrix}
    X_t \\
    Y_t
\end{pmatrix}=\begin{pmatrix}
    1 & 1-\theta \\
    1 & -\theta
\end{pmatrix}
\begin{pmatrix}
    Z_t \\
    \widetilde Z_t
\end{pmatrix}.\label{rep_of_two_factors}
\end{equation}
Applying Itô's lemma on $Z_t$ and $\widetilde Z_t$ together with \eqref{rep_of_two_factors} yields
\begin{equation}
\begin{aligned}
    dZ_t =\left(-\left( \lambda_x\theta + \lambda_y (1-\theta) \right)Z_t +  \theta(1-\theta)(\lambda_y - \lambda_x) \widetilde Z_t\right)dt + dW_t,
\end{aligned}\label{y_t_representation}
\end{equation}
and
\begin{equation*}
\begin{aligned}
    d\widetilde Z_t &= \left(-(\lambda_x\left(1-\theta) + \lambda_y\theta\right) \widetilde Z_t+(\lambda_y - \lambda_x) Z_t\right)dt,
\end{aligned}
\end{equation*}
whose solution is
\begin{equation}
\widetilde Z_t =(\lambda_y - \lambda_x) \int_0^te^{-\left(\lambda_x\left(1-\theta\right) + \lambda_y\theta\right)(t-s)}Z_sds. \label{y_2_solution}
\end{equation}
Substituting \eqref{y_2_solution} into \eqref{y_t_representation} yields \eqref{eq:Zdyn}. 
\end{proof}

\begin{remark}
As it can be seen on \eqref{eq:Zdyn}, for small $t$, the process $Z$ behaves similarly to an OU process with mean reversion $ \lambda_x\theta + \lambda_y \left(1-\theta\right)$. For larger $t$, the contribution from the exponentially weighted sum of the path of $Z$ up to time $t$ becomes more significant, which helps in controlling the decay rate of term structures.
\end{remark}

\subsection{Explicit expression of the VIX}
In continuous time, $\mbox{VIX}_t^2$ is defined as
  \begin{equation}
    \mbox{VIX}_t^2 = \frac {100^2}{\Delta} \E \left[-2\log(S_{t+\Delta}/S_t)\mid \mathcal{F}_t \right] = \frac{100^2}{\Delta} \int_t^{t+\Delta}\xi_t(s) ds,\label{eq:vixdef}
  \end{equation}
with $\Delta= 30$ days, and $\xi_t(s):=\E \left[\sigma_s^2\mid \mathcal{F}_t\right]$
the forward variance process. We can now derive an explicit expression of $\mbox{VIX}_t^2$ for the two-factor Quintic OU model {thanks to the Markovianity and Gaussianity of the process  $(X,Y)$.} 

\begin{proposition}
Under the two-factor Quintic OU model, $ \mathrm{VIX}_t^2$ can be expressed as an explicit polynomial in $(X_t, Y_t)$:
\begin{align}\label{vix_expression}
        \mathrm{VIX}_t^2 &= \frac{100^2}{\Delta}\sum_{m=0}^{10} \sum_{\ell=m}^{10} \beta_{m,\ell}(t) X_t^{m}Y_t^{\ell-m}=:  h_t(X_t, Y_t),
\end{align}
where $\beta_{m,l}(t)$ is a deterministic function defined in \eqref{expression_beta}. 
\end{proposition}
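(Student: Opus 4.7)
The starting point is the integral expression
\[
\mathrm{VIX}_t^2 = \frac{100^2}{\Delta}\int_t^{t+\Delta} \xi_t(s)\, ds, \qquad \xi_t(s)=\mathbb{E}[\sigma_s^2\mid \mathcal{F}_t],
\]
together with $\sigma_s^2 = g_0(s)^2\, p(Z_s)^2$ and $Z_s=\theta X_s + (1-\theta)Y_s$. The goal is to show that $\xi_t(s)$ is a polynomial of total degree at most $10$ in $(X_t,Y_t)$, with coefficients that are explicit deterministic functions of $(s,t)$; integrating in $s$ and grouping monomials then produces the $\beta_{m,\ell}(t)$.

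The plan is to proceed in three steps. First, I expand $p(Z_s)^2$ as a polynomial of degree $10$ in $Z_s$ and then further expand $Z_s^k = (\theta X_s+(1-\theta)Y_s)^k$ by the binomial theorem, so that $\sigma_s^2$ becomes a linear combination of monomials $X_s^{a}Y_s^{b}$ with $a+b\le 10$ and explicit coefficients depending on $g_0(s)$, $\theta$, and the $\alpha_k$. Second, I use that for $s\ge t$ the OU representations give
\[
X_s = e^{-\lambda_x(s-t)}X_t + \int_t^s e^{-\lambda_x(s-u)}dW_u, \qquad Y_s = e^{-\lambda_y(s-t)}Y_t + \int_t^s e^{-\lambda_y(s-u)}dW_u,
\]
so that, conditionally on $\mathcal{F}_t$, the pair $(X_s,Y_s)$ is Gaussian with mean $(e^{-\lambda_x(s-t)}X_t,\,e^{-\lambda_y(s-t)}Y_t)$ and a deterministic covariance matrix obtained from Itô isometry (the shared Brownian motion makes $X_s-e^{-\lambda_x(s-t)}X_t$ and $Y_s-e^{-\lambda_y(s-t)}Y_t$ jointly Gaussian with known variances and covariance).

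Third, I compute the conditional mixed moments $\mathbb{E}[X_s^{a}Y_s^{b}\mid \mathcal{F}_t]$ by Isserlis/Wick: writing $X_s=\mu_x+U$, $Y_s=\mu_y+V$ with $\mu_x,\mu_y$ the $\mathcal{F}_t$-measurable conditional means and $(U,V)$ centered Gaussian, I expand $(\mu_x+U)^a(\mu_y+V)^b$ binomially and use that $\mathbb{E}[U^iV^j]$ is a deterministic function of $s-t$ (and zero for odd $i+j$). This expresses $\mathbb{E}[X_s^{a}Y_s^{b}\mid \mathcal{F}_t]$ as a polynomial in $(X_t,Y_t)$ of total degree at most $a+b$, with coefficients that are explicit in $s-t$. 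Combining with Step~1 and integrating $\int_t^{t+\Delta}\cdot\, ds$ yields the announced representation, with $\beta_{m,\ell}(t)$ obtained by collecting, for each monomial $X_t^m Y_t^{\ell-m}$, all contributions from terms $X_s^aY_s^b$ in $p(Z_s)^2$ after conditioning and integrating.

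The only real obstacle is bookkeeping: carefully tracking indices through the binomial expansion of $p(Z_s)^2$, the expansion of $Z_s^k$ in $(X_s,Y_s)$, the Gaussian moment formula for $\mathbb{E}[X_s^aY_s^b\mid \mathcal{F}_t]$, and the final integration in $s$ to arrive at a single closed-form expression for $\beta_{m,\ell}(t)$ to be displayed in \eqref{expression_beta}. All the conceptual ingredients, namely the polynomial structure of $p$ and the Gaussian Markovian structure of $(X,Y)$, are immediately available.
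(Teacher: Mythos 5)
Your proposal is correct and follows essentially the same route as the paper: decompose the driving Gaussian factors at time $t$ into an $\mathcal{F}_t$-measurable part plus an independent centered Gaussian increment, expand $p(Z_s)^2$ via the binomial theorem, use explicit Gaussian moments, and integrate over $[t,t+\Delta]$. The only (harmless) difference is that you condition $X_s$ and $Y_s$ separately and invoke bivariate Isserlis moments $\mathbb{E}[U^iV^j]$, whereas the paper keeps the scalar decomposition $Z_s = H_t^s + G_t^s$ and only ever needs univariate moments $\mathbb{E}[(G_t^s)^p]$ of a single centered Gaussian, which makes the bookkeeping for $\beta_{m,\ell}(t)$ somewhat lighter.
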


\begin{proof}
By Markovianity, for $s\geq t$, we can decompose the process $Z_s$ as
\begin{equation*}
    \begin{aligned}
        Z_s &= \underbrace{\phantom{\int}\theta e^{-\lambda_x(s-t)} X_t + (1-\theta) e^{-\lambda_y(s-t)} Y_t}_{H_t^s} + \underbrace{{\int_t^s  \theta e^{-\lambda_x(s-u)} + (1-\theta) e^{-\lambda_y(s-u)}dW_u}}_{G_t^s},
    \end{aligned}\label{mark_exp_y}
\end{equation*}
where $H_t^s$ is $\mathcal{F}_t$ measurable, and $G_t^s$ is independent of $\mathcal{F}_t$. Applying the binomial theorem, we can write the forward variance process of the two-factor Quintic OU model $\xi_t(s)$ as
\begin{equation}
    \xi_t(s) = g_0^2(s) \sum_{k=0}^{10} (\alpha*\alpha)_k \E \left[ Z_s^k \mid \mathcal{F}_t\right]  = g_0^2(s)\sum_{k=0}^{10}\sum_{\ell=0}^k(\alpha*\alpha)_k  \binom{k}{\ell} (H_t^s)^{\ell} \E \left[ (G_t^s)^{k-\ell} \right],\label{fwd_var_process}
\end{equation}
with \begin{equation}
    \begin{aligned}
 (\alpha *\alpha)_k = & \sum_{j=0}^{k}\alpha_j\alpha_{k-j}, \quad  \binom{k}{i}= \frac{k!}{(k-i)!i!}.
    \end{aligned}
\end{equation}
The process $G_t^s$ is a centred Gaussian with variance
$$
\sigma^2_G(t, s) := \frac{\theta^2}{2\lambda_x}(1-e^{-2\lambda_x(s-t)})+\frac{(1-\theta)^2}{2\lambda_y}(1-e^{-2\lambda_y(s-t)})+\frac{2\theta(1-\theta)}{\lambda_x+\lambda_y} (1-e^{-(\lambda_x+\lambda_y) (s-t)}),
$$
so for all $p \in \mathbb{N}$, the quantity $\E \left[ (G_t^s)^{p} \right]$ is known explicitly. Substituting \eqref{fwd_var_process} into \eqref{eq:vixdef}, we have
\begin{equation}
\begin{aligned}
 \mbox{VIX}_t^2 &=  \frac{100^2}{\Delta} \int_t^{t+\Delta}\xi_t(s) ds = \frac{100^2}{\Delta}\sum_{k=0}^{10} \sum_{\ell=0}^{k}  (\alpha *\alpha)_k \binom{k}{\ell} \int_t^{t+\Delta} g_0^2(s) \mathbb{E}\left[(G_t^s)^{k-\ell} \right] (H_t^s)^{\ell}ds.
\end{aligned}\label{vix_sqaured_full}
\end{equation}
Re-applying the binomial theorem on the term $(H_t^s)^{\ell}$ and re-arranging the sums, we obtain the representation \eqref{vix_expression}, with 
\begin{equation}
    \begin{aligned} \beta_{m,\ell}(t) := &\binom{\ell}{m} \sum_{k=\ell}^{10} (\alpha *\alpha)_k \binom{k}{\ell} \theta^{m}(1-\theta)^{l-m}\int_t^{t+\Delta} g_0^2(s) \mathbb{E}\left[(G_t^s)^{k-l} \right] e^{-(m\lambda_x + (l-m)\lambda_y)(s-t)}ds
\end{aligned}\label{expression_beta}
\end{equation}
\end{proof}

\subsection{Pricing of VIX derivatives}
From \eqref{vix_expression}, we know that  $\mbox{VIX}_t^2$ is polynomial in $(X_t,Y_t)$ which is Gaussian with the law
\[
\begin{pmatrix}
    X_t \\
    Y_t
\end{pmatrix} \sim \mathcal{N} (\mu, \Sigma), \quad 
\mu = \begin{pmatrix}
    0 \\
    0
\end{pmatrix}, \quad \Sigma = \begin{pmatrix}
    \sigma_{xx} & \sigma_{xy} \\
    \sigma_{yx} & \sigma_{yy} 
\end{pmatrix},
\]
where
\[
\sigma_{ij} = \frac{1-e^{-(\lambda_i+\lambda_j)t}}{(\lambda_i+\lambda_j)}, \quad (i,j) \in \{x,y\}^2.
\]
Pricing of VIX derivatives with payoff function $\Phi$ is thus immediate by integrating the payoff with respect to the Gaussian density, that is
\begin{equation}\label{vix_derivative_pricing}
\E \left[\Phi(\mbox{VIX}_t) \right] = \E \left[\Phi\left(\sqrt{h_t(X_t,Y_t)}\right) \right] = \int_{\R^2} \Phi\left(\sqrt{h_t\left( x, y\right)}\right)f(x, y) {dxdy},
\end{equation}
where
\[
f(x, y) = \exp\left(-\frac{1}{2}(x,y) \Sigma^{-1} (x,y)^{\top}\right)/\left(2\pi \sqrt{\mbox{det}(\Sigma)}\right)
\]
is the two-dimensional Gaussian density. 

\begin{example}
 To price VIX futures prices, we set $\Phi(v) = v$; to price VIX vanilla call price, we set $\Phi(v) = (v-K)^{+}$.
\end{example}

The integral \eqref{vix_derivative_pricing} can be computed using various numerical cubature techniques. In particular, one can take advantage of the underlying Gaussianity of $\mbox{VIX}$ and make use of the quantization cubature \cite*{abi2022joint,pages2003optimal} for efficient computation of (conditional) expectations of functions of Gaussian variables. Compared to most cubature techniques that require double sums to approximate the double integral in \eqref{vix_derivative_pricing}, the quantization cubature involves only a single sum. Figure \ref{fig:quant_cub} shows numerical convergence of the VIX ATM implied volatility across different maturities using quantization cubature. We can see that the quantization cubature is highly accurate, and converges quickly to the Monte Carlo confidence interval with as few as 500 points to evaluate.
\vspace{-0.3cm}

  \begin{figure}[H]
    \centering    \includegraphics[width=0.7\textwidth]{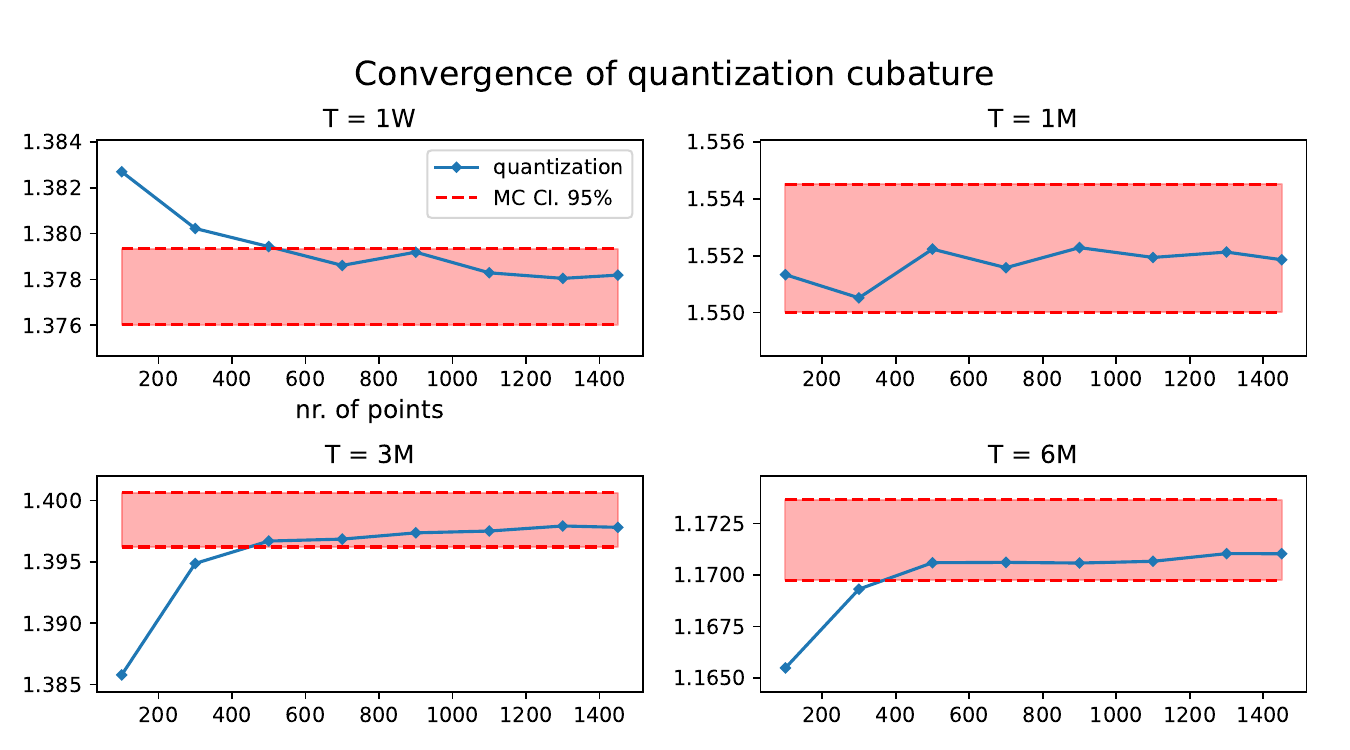}%
    \vspace{-0.5cm}
    \caption{Numerical convergence of the quantization cubature (in blue), vs.~the confidence interval generated by Monte Carlo estimator with 20,000,000 simulations. The parameters used here are the same as the calibrated parameters in Section \ref{sec:jc_spx_ssr}, with flat forward variance $\xi_0(t) = 0.03$.} 
    \label{fig:quant_cub} 
  \end{figure}

\subsection{Pricing of SPX derivatives}
The pricing of SPX derivatives is straightforward. The instantaneous volatility $\sigma$ can be simulated exactly through $(X, Y)$ due to their Gaussianity. Once the volatility process is obtained, the log-price of the underlying can be simulated using a standard Euler scheme, combined with variance reduction techniques to price SPX derivatives. 
We refer the reader to the detailed Python code implementation here: {\url{http://github.com/shaunlinz02/two_factor_quintic_ou}}.

\subsection{Model Skew-stickiness ratio}\label{ssr_mc}
The model SSR, $\mathcal{R}_{t,T}$ is defined in \cite*{bergomi2015stochastic} as
\begin{equation}
\mathcal{R}_{t, T} := \frac{1}{\mathcal{S}_{t,T}}\frac{\partial_t \langle \widehat \sigma^T, \log S \rangle_t}{\partial_t \langle \log S, \log S \rangle_t},  \label{def_ssr}
\end{equation}
where $\widehat \sigma^T_t$ denotes the ATM (forward) Black-Scholes implied volatility for a European option written on the underlying $S_t$ with maturity $T-t$, and $\mathcal{S}_{t,T}$ is the ATM skew. { Here, $\langle \cdot, \cdot \rangle$ denotes the quadratic covariation.} The ratio in \eqref{def_ssr} can be interpreted as the instantaneous change of the ATM implied volatility over the instantaneous change of the log-price, normalized by the ATM skew.

The model SSR of the two-factor Quintic OU model can be computed using a combination of Monte Carlo and finite-difference techniques as prescribed in \cite*{bergomi2015stochastic}. Given the Markovianity of the two-factor Quintic OU model, the ATM implied volatility process $\widehat \sigma_t^T$ is a function in $t$ and $(X_t, Y_t)$:
\[
\widehat \sigma_t^T := \widehat \sigma^T(t, X_t, Y_t).
\]
Assume that $\widehat \sigma^T$ is $\mathcal{C}^{1,2,2}$, we can apply Itô:
\[
d\widehat \sigma_t^T = \partial_{x}\widehat \sigma^T(t, X_t, Y_t)dX_t + \partial_{y}\widehat \sigma^T(t, X_t, Y_t)dY_t + \ldots dt
\]
so that
\[
\partial_t \langle \widehat \sigma^T, \log S \rangle_t = \rho \sigma_t  \left( \partial_{x}\widehat \sigma^T(t, X_t, Y_t) + \partial_{y}\widehat \sigma^T(t, X_t, Y_t) \right), 
\]
and
\[
\partial_t \langle \log S, \log S \rangle_t = \sigma_t^2.
\]
Substituting the above into \eqref{def_ssr}, we get
\begin{equation*}
    \begin{aligned}
    \mathcal{R}_{t,T} &= \frac{1}{\mathcal{S}_{t, T}}\frac{\partial_t \langle \widehat \sigma^T, \log S \rangle_t}{\partial_t \langle \log S \rangle_t}\\
    &=  \frac{\rho  \left( \partial_{x}\widehat \sigma^T(t, X_t, Y_t) + \partial_{y}\widehat \sigma^T(t, X_t, Y_t) \right)}{\sigma_t\mathcal{S}_{t,T}}\\
    &= \lim_{\epsilon \xrightarrow[]{}0}  \frac{\widehat \sigma^T(t, X_t + \epsilon\frac{\rho}{\sigma_t}, Y_t +\epsilon\frac{\rho}{\sigma_t}) - \widehat \sigma^T(t, X_t,Y_t)}{\epsilon \mathcal{S}_{t,T}}.
    \end{aligned}
\end{equation*}

\begin{remark}
    In the case of the one-factor Quintic OU model, i.e.~$\theta \in \{0,1\}$ and the one-factor Bergomi model, the model SSR can be computed efficiently via Fourier techniques using the semi-explicit expression of the joint Fourier-Laplace transform of the log price and integrated variance, see \cite*{jaber2024fourier}.
\end{remark}

\section{Calibration results}\label{numerical_experiments}

\subsection{Joint calibration of SPX and SSR term structure}\label{sec:jc_spx_ssr}

We calibrate the two-factor Quintic OU model to the term structure of SPX ATM implied volatility, skew, and SSR as of 6 May 2024 using data points reconstructed from various figures in \cite*{bourgey2024smile}, {where the empirical SSR is estimated using a 60-day rolling window before the calibration date as appeared in the top left sub-figure of Figure 14}. Figure \ref{fig:spx_term_struc} demonstrates the fit of the two-factor Quintic OU model, with the following calibrated parameters $\lambda_x = 33.754,\: \lambda_y = 2.027,\: \theta = 0.678,\: \rho = -0.588,\: (\alpha_0,\:\alpha_1,\: \alpha_2,\: \alpha_3,\: \alpha_4) = (0.0025,\:  0.009,\: -0.0594,\: -0.0328,\  0.3239)$.

  \begin{figure}[H]
    \centering    \includegraphics[width=1.02\textwidth]{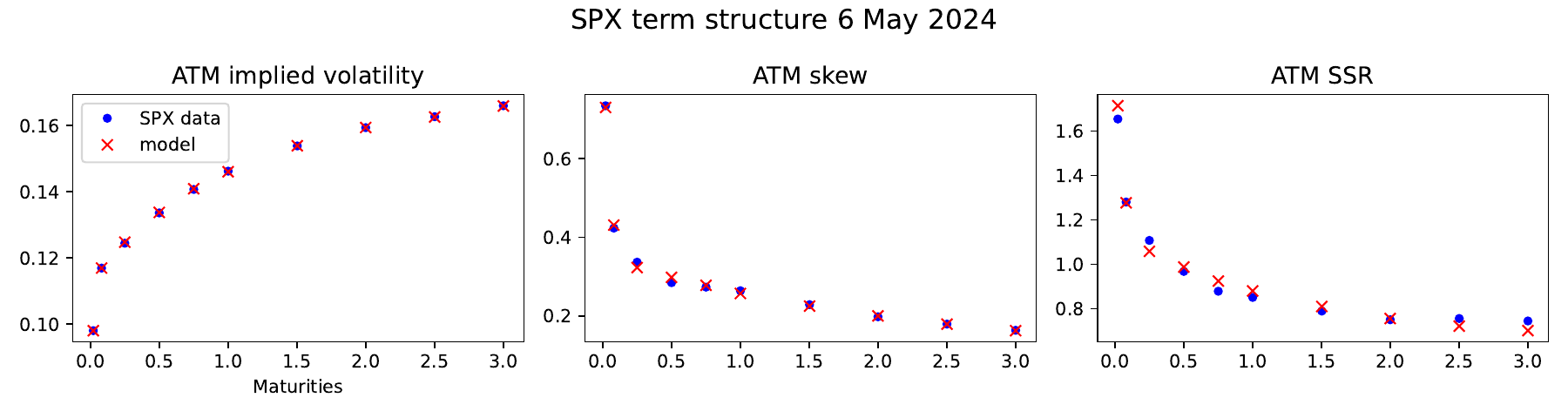}%
    \vspace{-0.5cm}
    \caption{The term structure of the SPX ATM implied volatility, skew and SSR as of 6 May 2024 (in blue), calibrated by the two-factor Quintic OU model (in red).} 
    \label{fig:spx_term_struc} 
  \end{figure}

In contrast with the initial one-factor Quintic model \cite*{jaber2022quintic}, where the parameters $\alpha_2$ and $\alpha_4$ were set to zero, we allow these parameters to be free. This allows the model to capture the term structure of the SSR to produce richer dynamics and fits better than classical (multifactor) stochastic volatility models such as Heston and Bergomi. The remarkable fit shown in Figure \ref{fig:spx_term_struc} suggests that stochastic volatility models are capable of capturing the term structure of SPX volatility surface and its dynamics, contrary to popular beliefs in the quantitative finance community.

{We can compare the fit in Figure \ref{fig:spx_term_struc} to the calibrated two-factor Bergomi model, a popular model among practitioners. Figure \ref{fig:two_factor_bergomi_ssr} shows that the two-factor Bergomi model is unable to produce the desired SSR term structure, even though it is capable of matching the ATM implied volatility and ATM skew term structure. Similar results were obtained in [Figure 14, \cite*{bourgey2024smile}]. 

  \begin{figure}[H]
    \centering    \includegraphics[width=1.02\textwidth]{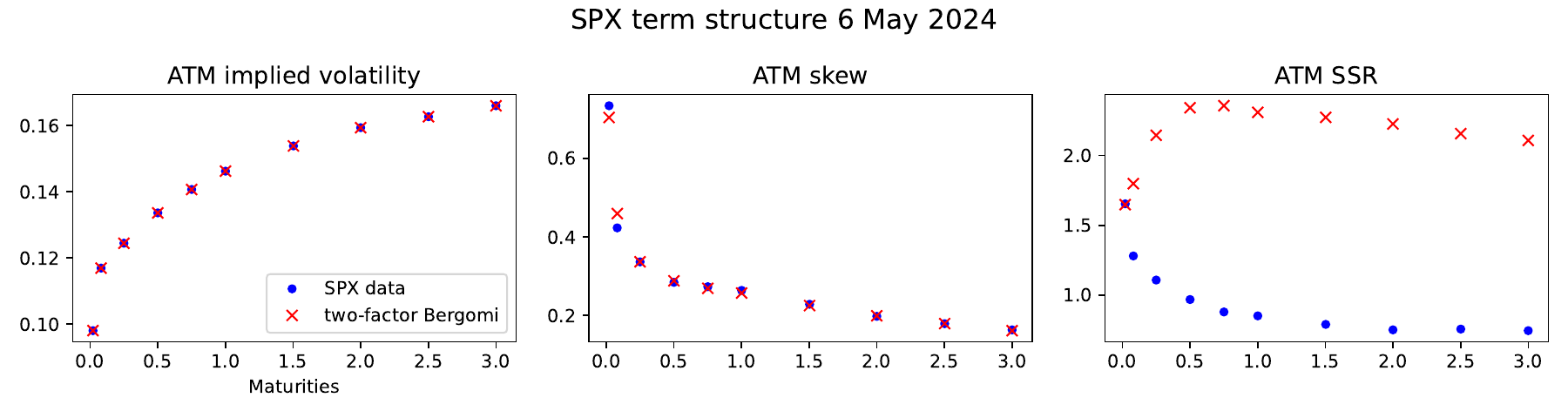}%
    \caption{The term structure of the SPX ATM implied volatility, skew, and SSR as of 6 May 2024 (in blue), calibrated by the two-factor Bergomi model (in red). 
    } 
    \label{fig:two_factor_bergomi_ssr} 
  \end{figure}

In addition, the two-factor Quintic OU model can generate nearly identical SPX smiles with very different SSR profiles. In Appendix \ref{diff_ssr_profile}, we plot the calibrated SPX smiles for different parameter sets, along with their respective SSR term structures and the associated polynomial function $p$ defined in \eqref{two_factor_Quintic} to highlight the flexibility of the two-factor Quintic OU model.
  
}

\subsection{Joint calibration of SPX and VIX}\label{sec:jc_no_ssr}
 \cite*{jaber2022quintic}  showed that the one-factor Quintic OU model is capable of jointly calibrating the SPX and VIX volatility smiles for short maturities between one week and three months. In this section, we show how adding a second factor allows the quintic OU model to jointly calibrate a larger range of maturities. At first, we set $\alpha_2 = \alpha_4 = 0$, similar to the one-factor Quintic OU model to keep the model parsimonious. With $\alpha_5 = 1$ fixed as per Section \ref{2_fac_model}, there are seven model parameters 
\[
(\lambda_x, \lambda_y, \theta, \rho, \alpha_0, \alpha_1, \alpha_3)
\]
plus the forward variance curve $\xi_0(t)$ to calibrate. Figure \ref{fig:calib_2_factor} shows that the two-factor Quintic OU model is capable of fitting the full SPX \& VIX smiles for maturities between one week and two years on 23 October 2017, with the following calibrated parameters $\lambda_x = 31.8,\: \lambda_y = 0.659,\:  \theta = 0.964,\: \rho = -0.765,\: (\alpha_0,\:\alpha_1,\:\alpha_3) = 0.0004,\: 0.0046,\: 
 0.0096)$.
  \begin{figure}[H]
    \centering
    \includegraphics[scale=0.45]{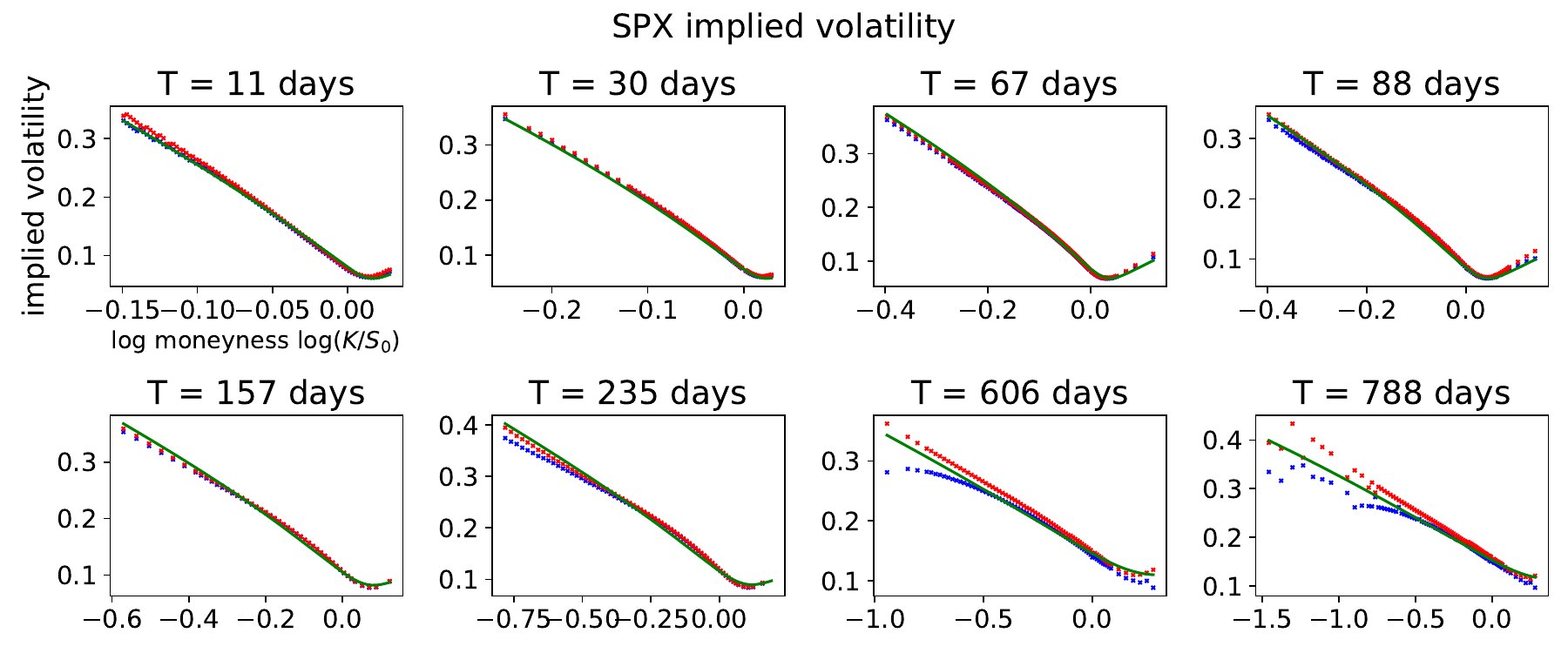}
    \includegraphics[scale=0.45]{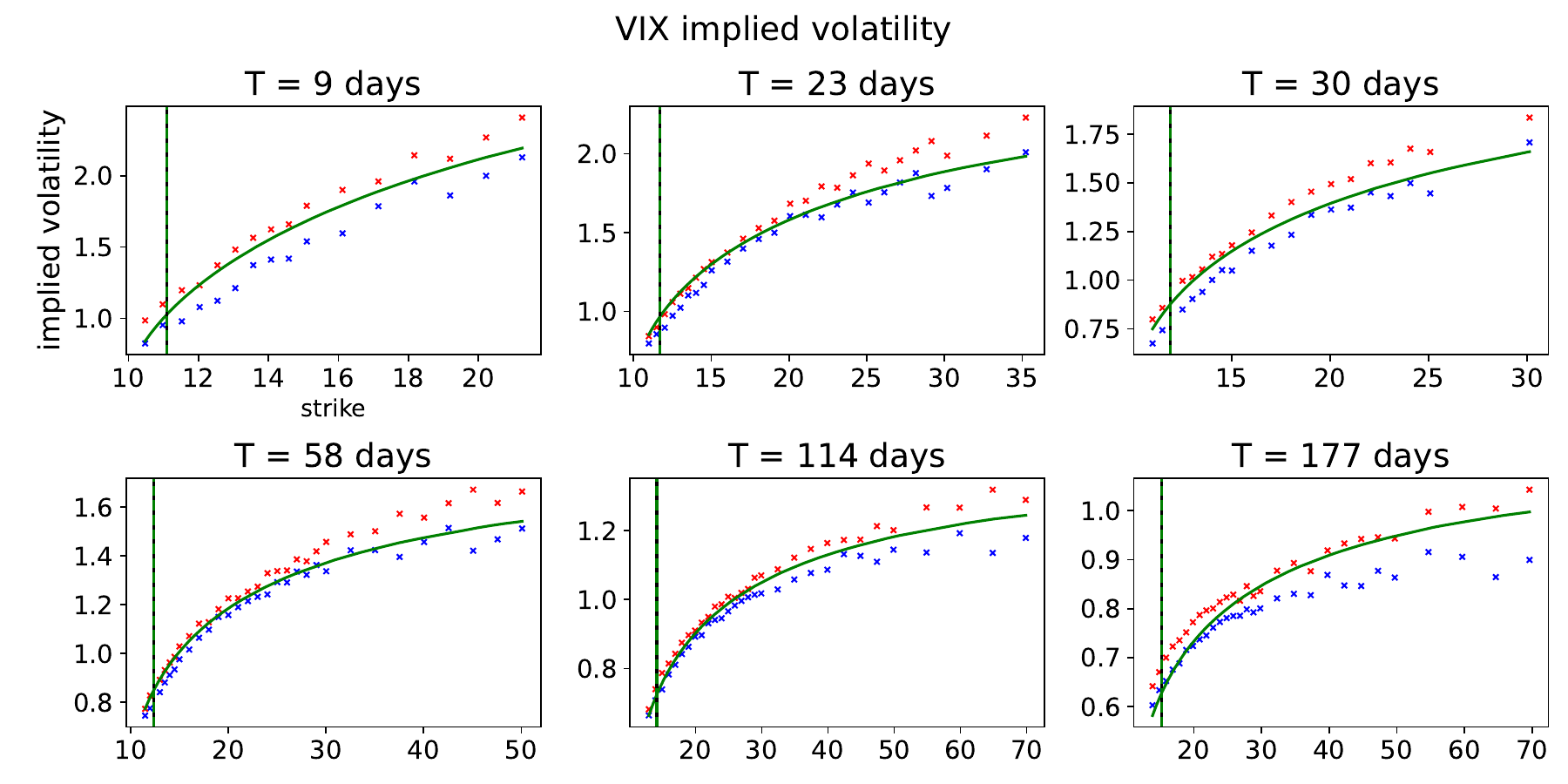}
    \caption{SPX \& VIX smiles (bid/ask in blue/red dots) and VIX futures (vertical black lines) on 23 October 2017, jointly calibrated by the two-factor Quintic OU model (green lines) {without SSR penalisation}.}
    \label{fig:calib_2_factor}
  \end{figure}

We can now compute the model SSR using the calibrated parameters as described in Section \ref{ssr_mc}. In order to compare model SSR with empirical observation, we computed the time series of SPX SSR over the period 2012 to 2022, which mostly fluctuates between 0.9 and 2.0 as shown in Figure \ref{fig:ssr_time_series}. This is in line with other independent computations in the literature, see [Figure 6, \cite*{gatheral2023computing}], and  [Figure 18, \cite*{bourgey2024smile}]. The left-hand side graph of Figure \ref{fig:ssr_two_factor_no_calib} shows the level of the model SSR using the above calibrated parameters, and one immediately sees that the SSR generated by this set of calibrated parameters is significantly below the market level shown in Figure \ref{fig:ssr_time_series}.

  \begin{figure}[H]
    \centering    \includegraphics[width=0.9\textwidth]{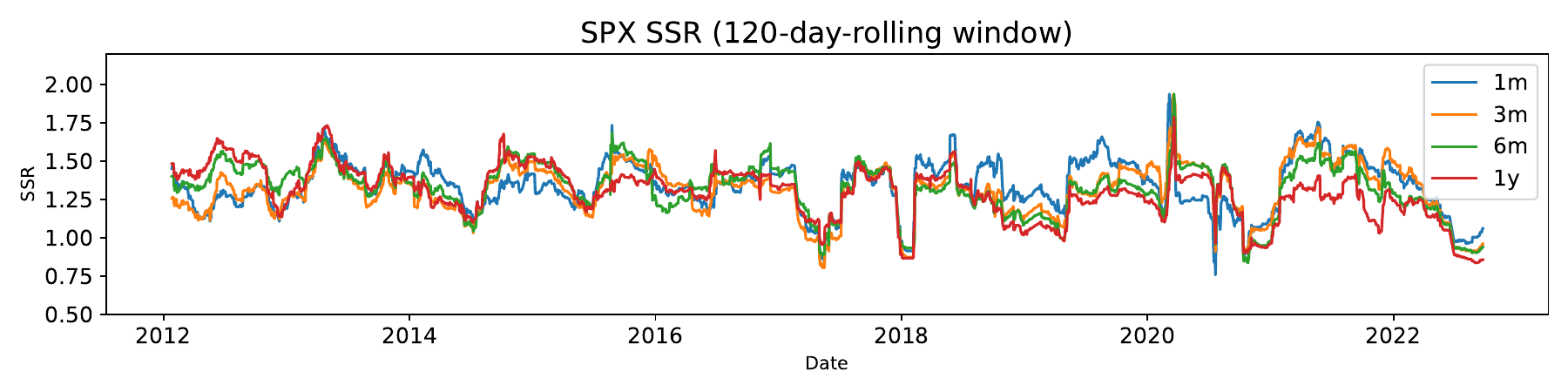}%
    \vspace{-0.5cm}
    \caption{Empirical time series of the SSR from 2012 to 2022, computed using a 120-day rolling window with data points outside two standard deviations filtered out.} 
    \label{fig:ssr_time_series} 
  \end{figure}
\vspace{-0.5cm}

\subsection{Joint calibration of SPX, VIX and SSR}\label{jc_3}

To ensure that the two-factor Quintic OU model respects the range of empirical SPX SSR while remaining consistent with the SPX and VIX smiles, {we bring back $\alpha_2$ and $\alpha_2$ as model parameters to be calibrated, and include an additional loss function during the joint calibration step, penalizing any model parameters} that would result in a model SSR outside the interval $[0.9,  2.0]$.

After taking into account the model SSR in the calibration procedure, Figure \ref{fig:calib_2_factor_ssr} shows that the two-factor Quintic OU model is still able to achieve a decent fit to the joint SPX and VIX smiles. In addition, the right-hand side graph of Figure \ref{fig:ssr_two_factor_no_calib} shows that the SSR of the two-factor Quintic OU model is now also consistent with the usual range observed on the market. The calibrated parameters are $\lambda_x = 35.2,\: \lambda_y = 0.623,\: \theta = 0.94,\: \rho = -0.769$,\:  $(\alpha_0,\:\alpha_1,\:\alpha_2,\:\alpha_3,\:\alpha_4) = (0.0004,\: 0.0038,\: 0.0004,\: 0.0085,\: 0.0005)$.

  \begin{figure}[H]
    \centering
    \includegraphics[scale=0.45]{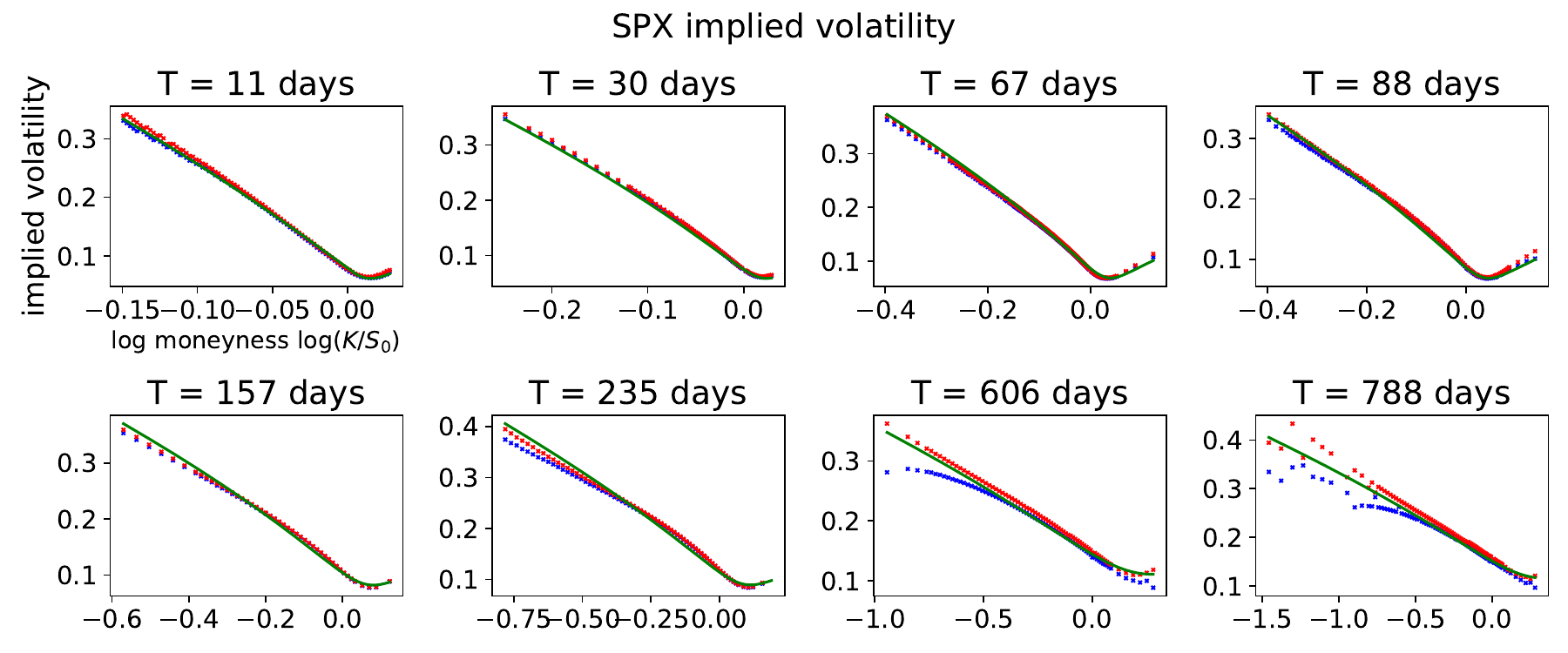}
    \includegraphics[scale=0.45]{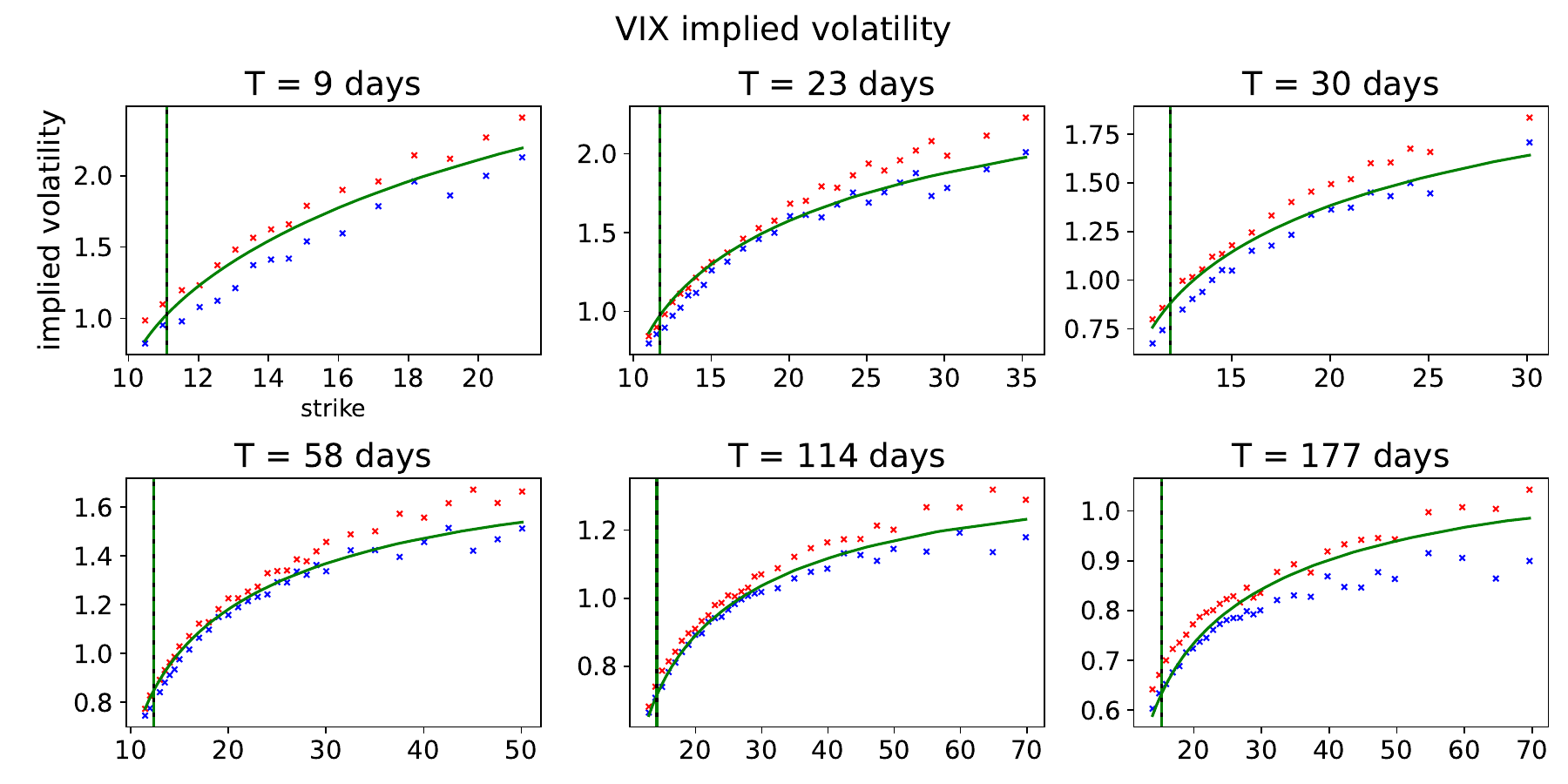}
    \caption{SPX \& VIX smiles (bid/ask in blue/red dots) and VIX futures (vertical black lines) on 23 October 2017, jointly calibrated by the two-factor Quintic OU model (in green) {with SSR penalisation}.}
    \label{fig:calib_2_factor_ssr}
  \end{figure}

\begin{figure}[H]
\centering
\includegraphics[width=0.45\textwidth]{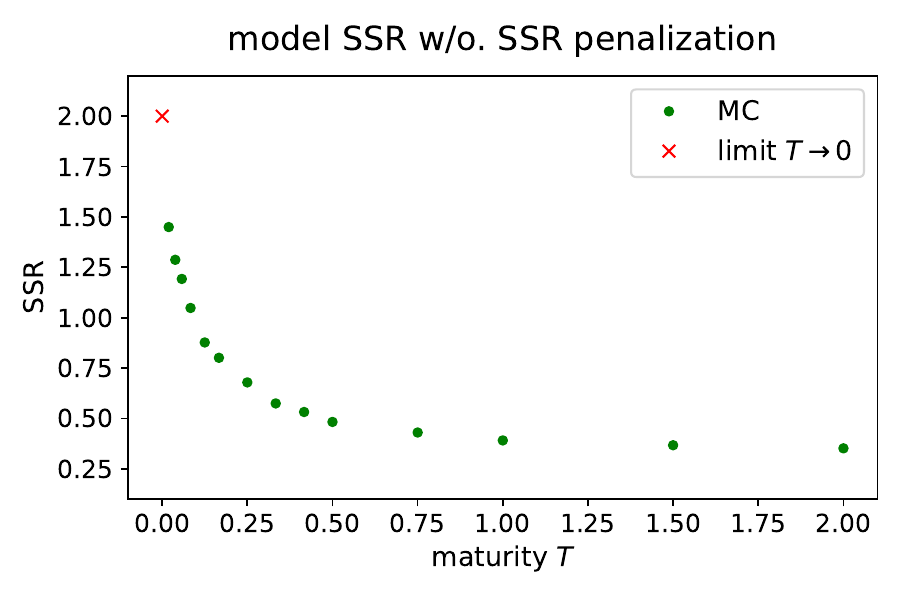}%
\includegraphics[width=0.45\textwidth]{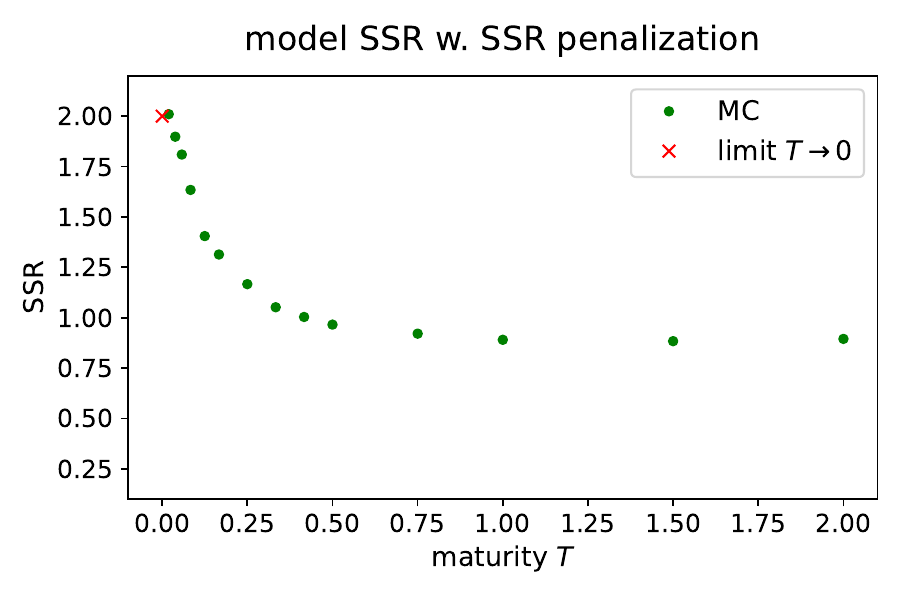}
\vspace{-0.4cm}

\caption{SSR of the two-factor Quintic OU model computed by finite difference and Monte Carlo. The left-hand side graph is the SSR of the two-factor Quintic OU model jointly calibrated to SPX and VIX smiles ({as per Section \ref{sec:jc_no_ssr}  with $\alpha_2$ and $\alpha_4$ kept at zero}). The right-hand side graph is the SSR of the two-factor Quintic OU model jointly calibrated to SPX and VIX smiles, as well as the SSR (({\ref{jc_3} with $\alpha_2$ and $\alpha_4$ kept at zero})).} \label{fig:ssr_two_factor_no_calib}
\end{figure}

\section{Stylized facts of the two-factor quintic OU model}\label{sf_all}

Besides fitting market smiles, it is also desirable for a model to reproduce market stylized facts. These are statistical properties of financial time series that reflect a broad consensus among researchers and practitioners regarding the behavior of financial markets, established through extensive empirical studies over time, such as
\cite*{chicheportiche2014fine, cont2001empirical,  zumbach2010volatility}. In this section, we show how the two-factor Quintic OU model can produce several key stylized facts that are consistent with market observations, including the elusive Zumbach effect that is often difficult for traditional Markovian stochastic volatility models to capture.


To compute the stylized facts of the SPX, we used the actual S\&P500 realized volatility and closing prices between 2000 and 2017. For the two-factor Quintic OU model, we first simulate a trajectory of the spot volatility process $\sigma_t$, with 5-minute time steps to allow for the construction of daily realized volatility over a time horizon of 20 years. We then simulate the trajectory of the log price to compute the model stylized facts. The parameters used to generate the spot volatility and log price trajectories of the two-factor Quintic OU model are: $\lambda_x = 35.2,\: \lambda_y = 0.623,\: \theta = 0.86,\: \rho = -0.7,\: (\alpha_0,\:\alpha_1,\:\alpha_2,\:\alpha_3,\:\alpha_4) = (0.0509,\: 0.4475,\: 0.0504,\: 0.9981,\: 0.059)$. {This parameter set is similar to that under Section \ref{jc_3}, with slight tweaks by hand to fit better the {magnitude of the} observed stylized facts}.

Figure \ref{fig:stylized_facts_group} shows several key stylized facts of SPX vs.~the same stylized facts produced by the two-factor Quintic OU model. We refer the reader to \cite*{cont2001empirical} for the precise definition of volatility clustering, skewness, kurtosis, and leverage effect, which we have followed in this Section. Based on Figure \ref{fig:stylized_facts_group}, it appears that the stylized facts produced by the two-factor quintic OU model are largely consistent, at least qualitatively, with that of the SPX. This is a promising outcome for a standard stochastic volatility model with only two Markovian factors.

  \begin{figure}[H]
    \centering
    \subfigure{
    \includegraphics[width=0.22\textwidth]{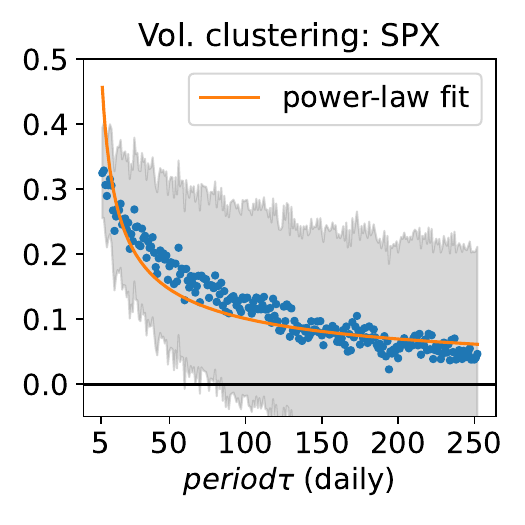}
    }
    \subfigure{
        \includegraphics[width=0.22\textwidth]{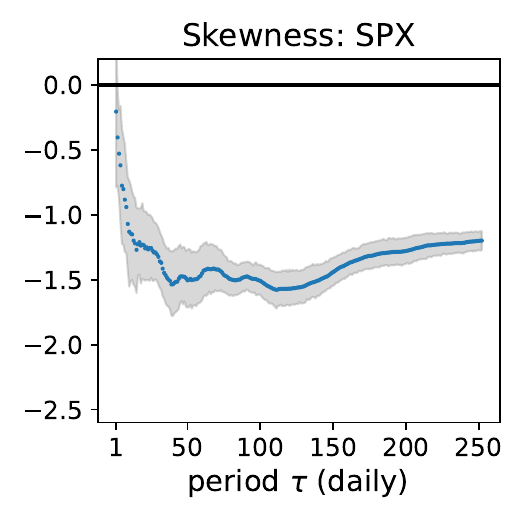}
    }
    \subfigure{
        \includegraphics[width=0.22\textwidth]{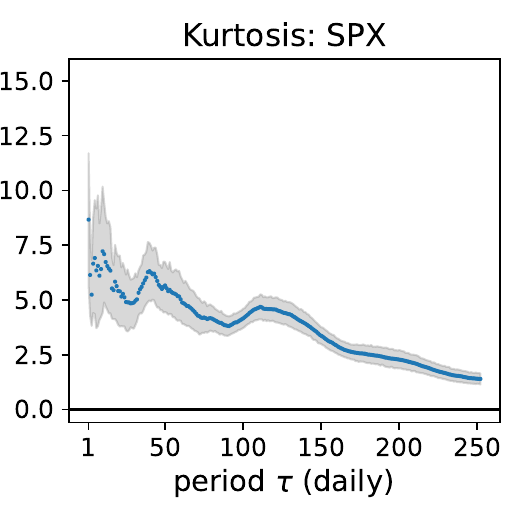}
    }
    \subfigure{
        \includegraphics[width=0.22\textwidth]{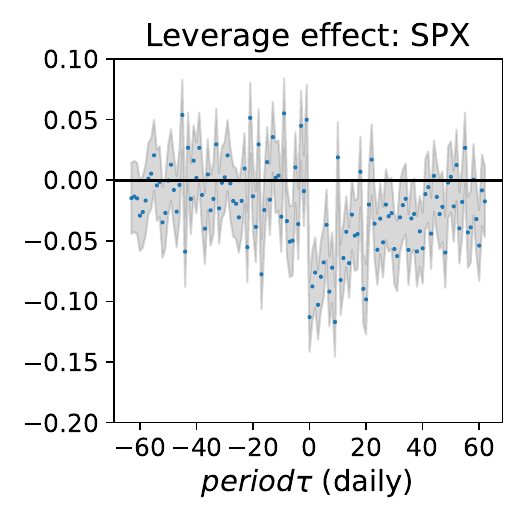}
    }
    \vspace{-0.5cm}

    \subfigure{
    \includegraphics[width=0.22\textwidth]{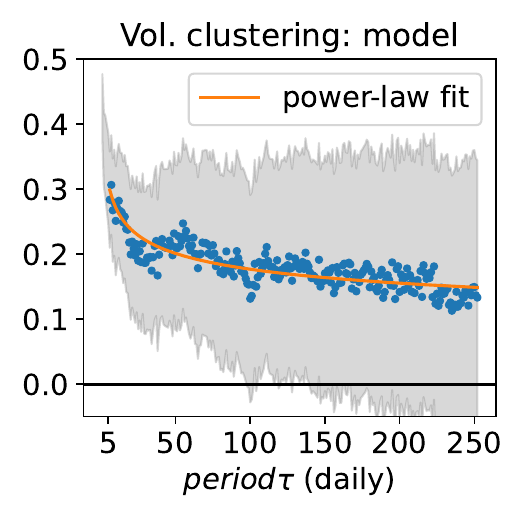}
    }
    \subfigure{
        \includegraphics[width=0.22\textwidth]{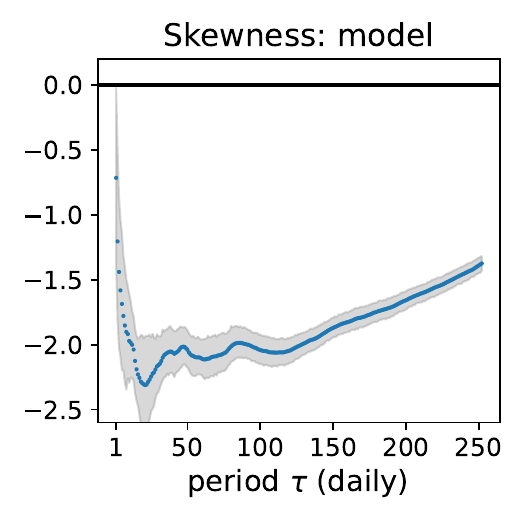}
    }
    \subfigure{
        \includegraphics[width=0.22\textwidth]{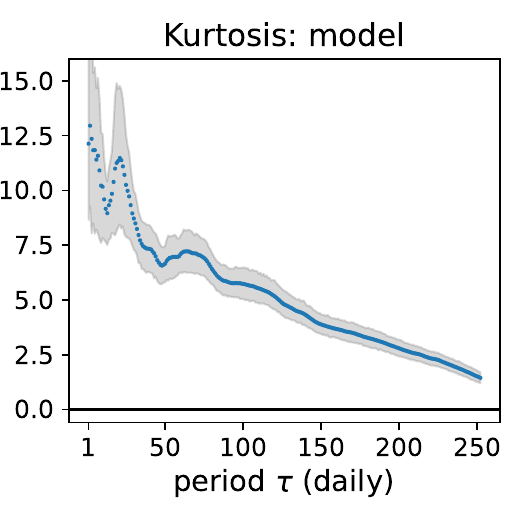}
    }
    \subfigure{
        \includegraphics[width=0.22\textwidth]{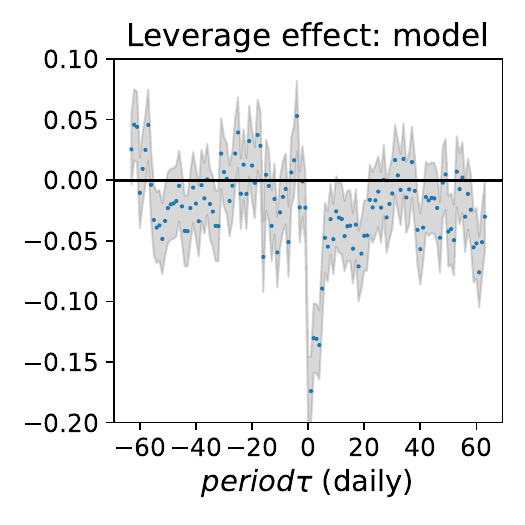}
    }

    \caption{Several stylized facts of SPX for the period between 2000 and 2017 vs.~the two-factor Quintic OU model. The shaded gray area represents 95 percent confidence interval. {The confidence interval for the Skewness and Kurtosis estimator is computed via bootstrapping with 1,000 iterations.}}
    \label{fig:stylized_facts_group}
\end{figure}


We now look at the  Zumbach effect, which measures the asymmetry of the predictive power between squared returns and (realized) volatility. Empirical studies show that past squared returns are better at predicting future volatility than past volatility at predicting future squared returns, see \cite*{zumbach2009time,zumbach2010volatility}. To measure this asymmetry, we use the estimator from [Equation 25, \cite*{chicheportiche2014fine}]. First, we define
\begin{align*}
\mathcal{C}(\tau) &:= \langle (\Tilde \sigma_t^2 - \langle \Tilde \sigma_t^2\rangle) r_{t-\tau}^2\rangle, \quad \Tilde \rho(\tau) := \frac{\mathcal{C}(\tau)}{\sqrt{\langle (\Tilde \sigma_t^2 - \langle \Tilde \sigma_t^2\rangle)^2 ( r_{t-\tau}^2 - \langle r_{t-\tau}^2\rangle)^2 \rangle}},
\end{align*}
with $\Tilde \sigma_t$ the (annualized) daily realized volatility based on the 5-minute intra-day returns, $r_{t}$ the daily log-return,  and $\langle \cdot \rangle$ a notation commonly used by physicists to denote the empirical average of a finite sample. The Zumbach effect is defined as
\begin{equation}
\mathcal{Z}(\tau) := \sum_{i=1}^{\tau}\Big(\Tilde \rho \left(\tau_i\right) - \Tilde \rho \left(-\tau_i\right) \Big).
\end{equation}
Figure \ref{fig:zumbach} shows the asymmetry of the Zumbach effect between squared returns and realized volatility indeed exists in the two-factor Quintic OU model, {and is consistent with the Zumbach effect of SPX in terms of cumulated asymmetry $\mathcal{Z}$} and the speed of decay of $\Tilde \rho$. One can expect even better results when model parameters are specifically fitted to the stylized facts of SPX.

  \begin{figure}[H]
    \centering
    \subfigure{
    \includegraphics[width=0.35\textwidth]{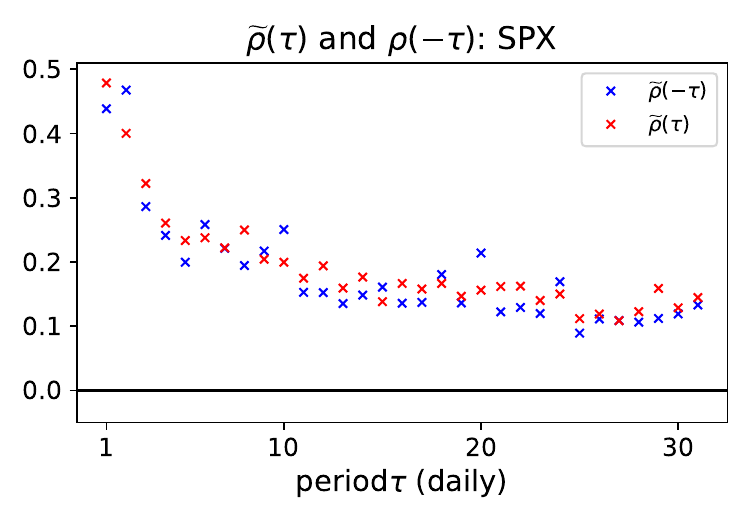}
    }
    \subfigure{
        \includegraphics[width=0.35\textwidth]{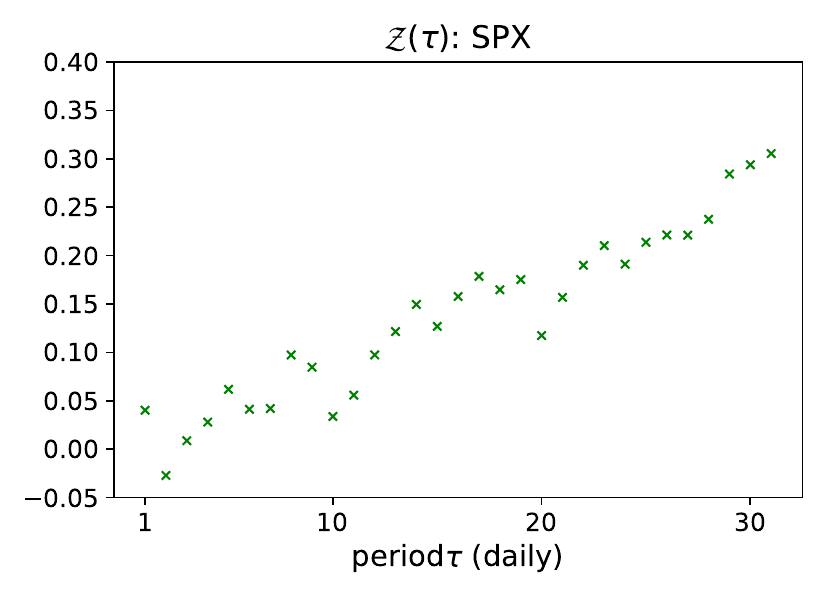}
    }
    \subfigure{
        \includegraphics[width=0.35\textwidth]{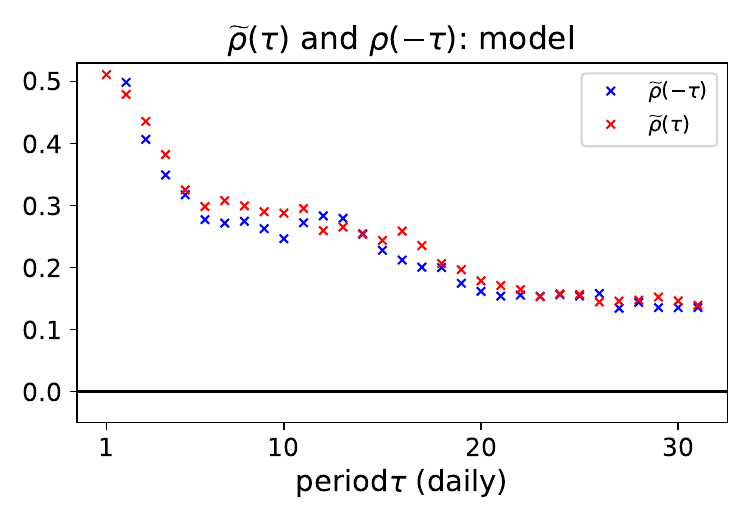}
    }
    \subfigure{
        \includegraphics[width=0.35\textwidth]{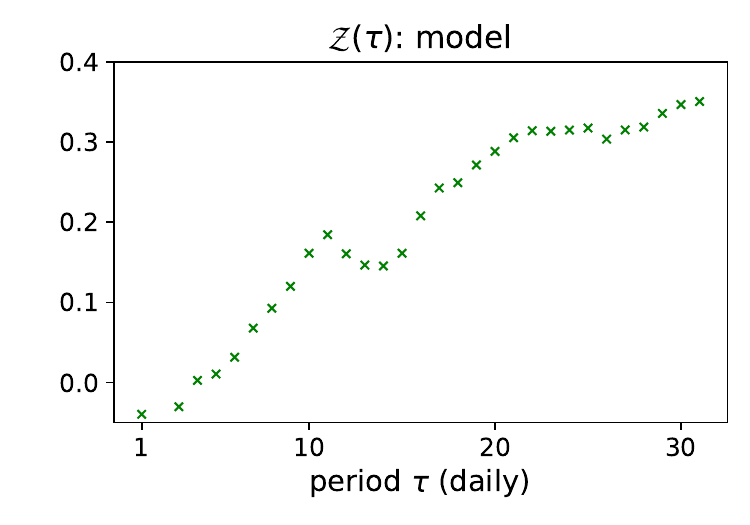}
    }
    \caption{Zumbach effect of the SPX for the period between 2010 and 2017 vs.~the two-factor Quintic OU model.}
    \label{fig:zumbach}
\end{figure}

{
\appendix



\section{Same smile, different SSR term structures}\label{diff_ssr_profile}
  \begin{figure}[H]
    \centering    \includegraphics[width=0.6\textwidth]{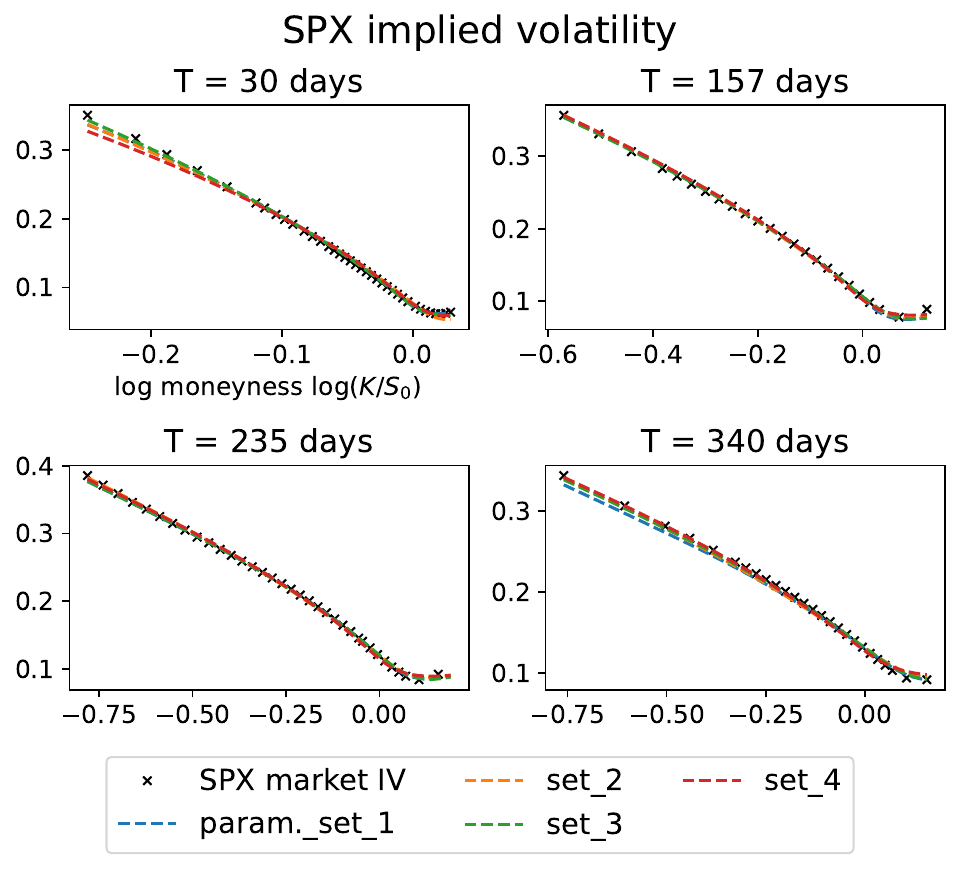}%
    \caption{SPX smiles generated by the two-factor Quintic OU model with different parameter sets, calibrated to the SPX smiles as of 23 October 2017, with maturities ranging from 1 month to 1 year.}
    \label{fig:two_factor_quintic_smile} 
  \end{figure}
\vspace{-0.5cm}

  \begin{figure}[H]
    \centering    \includegraphics[width=0.8\textwidth]{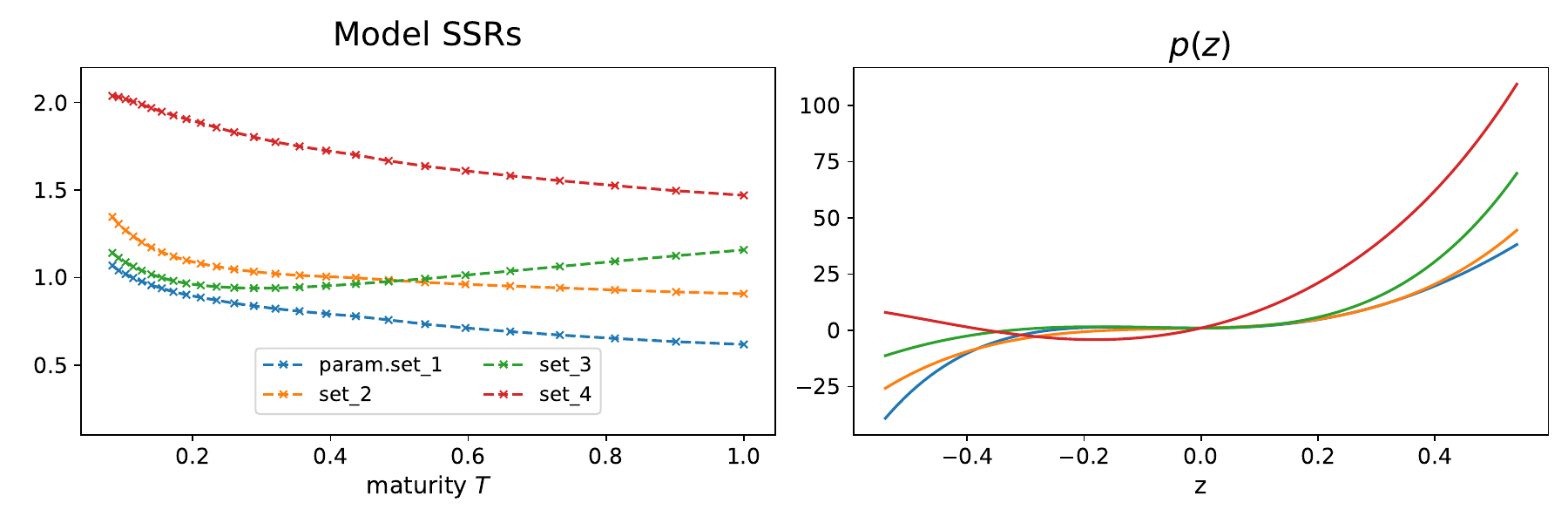}%
    \caption{The term structure of model SSRs (left) and the polynomial function $p(z)$ (right), generated by the two-factor Quintic OU model using the parameter sets as per in Figure \ref{fig:two_factor_quintic_smile}.} 
    \label{fig:ssr_and_p_curve_multi} 
  \end{figure}

}




\bibliographystyle{plainnat}
\bibliography{bibl.bib}

\end{document}